\documentclass[conference,a4paper]{IEEEtran}
\usepackage[tbtags]{amsmath}
\usepackage{amsfonts}
\usepackage{amssymb}
\usepackage{amsthm}
\usepackage{subfigure}
\usepackage{cite}
\usepackage{calc}
\usepackage{color}
\usepackage{epsfig}
\usepackage{setspace}
\usepackage{pstricks}
\usepackage{cancel}
\usepackage{multirow}

\usepackage[a4paper, total={7in, 9.8in}, top=20mm]{geometry}

\newtheorem{defn}{Definition}
\newtheorem{thm}{{\cal T}heorem}[section]
\newtheorem{cor}[thm]{Corollary}
\newtheorem{prop}{Proposition}
\newtheorem{lem}[thm]{Lemma}
\newtheorem{conj}[thm]{Conjecture}
\newtheorem{constr}[thm]{Construction}
\newtheorem{note}{Remark}
\newtheorem{claim}{Claim}

\newcommand{\bit}{\begin{itemize}}
\newcommand{\eit}{\end{itemize}}
\newcommand{\bcor}{\begin{cor}}
\newcommand{\ecor}{\end{cor}}
\newcommand{\beq}{\begin{equation}}
\newcommand{\eeq}{\end{equation}}
\newcommand{\beqn}{\begin{equation}}
\newcommand{\eeqn}{\end{equation}}
\newcommand{\bea}{\begin{eqnarray}}
\newcommand{\eea}{\end{eqnarray}}
\newcommand{\bean}{\begin{eqnarray*}}
\newcommand{\eean}{\end{eqnarray*}}
\newcommand{\ben}{\begin{enumerate}}
\newcommand{\een}{\end{enumerate}}
\newcommand{\bdefn}{\begin{defn}}
\newcommand{\edefn}{\end{defn}}
\newcommand{\bnote}{\begin{note}}
\newcommand{\enote}{\end{note}}
\newcommand{\bprop}{\begin{prop}}
\newcommand{\eprop}{\end{prop}}
\newcommand{\blem}{\begin{lem}}
\newcommand{\elem}{\end{lem}}
\newcommand{\bthm}{\begin{thm}}
\newcommand{\ethm}{\end{thm}}
\newcommand{\bconj}{\begin{conj}}
\newcommand{\econj}{\end{conj}}
\newcommand{\bconstr}{\begin{constr}}
\newcommand{\econstr}{\end{constr}}
\newcommand{\bpf}{\begin{proof}}
\newcommand{\epf}{\end{proof}}




\begin{document}

\title{On Maximally Recoverable Codes for Product Topologies}

\author{
  \IEEEauthorblockN{D. Shivakrishna$^{\star}$, V. Arvind Rameshwar$^{\dagger}$, V. Lalitha$^{\star}$, Birenjith Sasidharan$^{\ddagger}$}
  \IEEEauthorblockA{$^{\star}$SPCRC, International Institute of Information Technology, Hyderabad, India}
  \IEEEauthorblockA{$^{\dagger}$Department of ECE, BITS Pilani, Hyderabad Campus, India}
   \IEEEauthorblockA{$^{\ddagger}$Department of ECE, Indian Institute of Science, Bangalore, India}
    Email: d.shivakrishna@research.iiit.ac.in, f2014299@hyderabad.bits-pilani.ac.in, \\ lalitha.v@iiit.ac.in, birenjith@iisc.ac.in}

\maketitle

\begin{abstract}
Given a topology of local parity-check constraints, a maximally recoverable code (MRC) can correct all erasure patterns that are information-theoretically correctable. In a grid-like topology, there are $a$ local  constraints in every column forming a column code, $b$ local constraints in every row forming a row code, and $h$ global constraints in an $(m \times n)$ grid of codeword. Recently, Gopalan et al. initiated the study of MRCs under grid-like topology, and derived a necessary and sufficient condition, termed as the regularity condition, for an erasure pattern to be recoverable when $a=1, h=0$. 

In this paper, we consider MRCs for product topology ($h=0$). First, we construct a certain bipartite graph based on the erasure pattern satisfying the regularity condition for product topology (any $a, b$, $h=0$) and show that there exists a complete matching in this graph. We then present an alternate direct proof of the sufficient condition when $a=1, h=0$. We later extend our technique to study the topology for $a=2, h=0$, and characterize a subset of recoverable erasure patterns in that case. For both $a=1, 2$, our method of proof is uniform, i.e., by constructing tensor product $G_{\text{col}} \otimes G_{\text{row}}$ of generator matrices of column and row codes such that certain square sub-matrices retain full rank. The full-rank condition is proved by resorting to the matching identified earlier and also another set of matchings in erasure sub-patterns. 
\end{abstract}

\section{Introduction}

In a distributed storage system (DSS), node failures are modelled as erasures and codes are employed to provide reliability against failures. Reliability of a DSS gives guarantee against worst case node failures. However, single node failures are the the most common case of node failures. Though maximal distance separable (MDS) codes offer very good reliability for a given storage overhead, they suffer from the disadvantage that the number of nodes contacted for node repair in case of single node failure is large. To enable more efficient node repair in case of single node failures, codes with locality have been proposed \cite{GopHuaSimYek}.  Consider an $[n,k,d_{\min}]$ linear code $\mathcal{C}$ over the field $\mathbb{F}_q$. Codes with locality are a class of linear codes which have another parameter associated with them known as locality $r$. The minimum distance of a code with locality (with locality $r$) is upper bounded by
\bea \label{eq:gopalan_bound}
d_{\min} & \leq &  \underbrace{n- k +1}_{\text{Singleton Bound}} - \underbrace{\left( \left\lceil \frac{k}{r} \right\rceil - 1 \right )}_{\text{Penalty because of locality}}.
\eea

In the context of codes with locality, an additional feature of maximal recoverability was introduced so that given the local parity-check constraints, the code can recover from maximum possible number of erasure patterns. In the rest of the section, we formally define maximally recoverable codes (MRC) for grid-like and product topologies, provide an overview of the known results and summarize our contributions in this paper.

\subsection{MRC for Grid-Like Topologies}

\begin{defn}[Code Instantiating a Topology $T_{m,n}(a, b, h)$] \label{defn:topology}
Consider a code $\mathcal{C}$ in which each codeword is a matrix $C$ of size $m \times n$, with $c_{ij}$ denoting the $(i,j)^\text{th}$ coordinate of the codeword. The code $\mathcal{C}$ of length $mn$ is said to instantiate a topology $T_{m,n}(a, b, h)$ if for some $b \times n$ matrix $H_{\text{row}}$, $a \times m$ matrix $H_{\text{col}}$ and $h \times n$ matrix $H_{\text{glob}}$, it satisfies the following conditions:
\begin{enumerate}
\item $\mathcal{C}$ punctured to a row $i$ satisfies a set of `$b$' parity equations given by
\begin{equation*}
H_{\text{row}} \  [c_{i1}, c_{i2}, \ldots, c_{in}]^t = \bold{0}, \ \ \forall \ i \in [m].
\end{equation*}
The  $b$ parity equations given by $H_{\text{row}}$ need not be linearly independent and hence the code whose parity-check matrix is $H_{\text{row}}$ has parameters $[n,\geq n-b]$ code and is denoted by $\mathcal{C}_{\text{row}}$.
\item $\mathcal{C}$ punctured to a column $j$ satisfies a set of `$a$' parity equations given by
\begin{equation*}
H_{\text{col}} \ [c_{1j}, c_{2j}, \ldots, c_{mj}]^t = \bold{0}, \ \ \forall \ j \in [n].
\end{equation*}
Similar to the first condition, the code whose parity-check matrix is $H_{\text{row}}$ has parameters $[m,\geq m-a]$ code and is denoted by $\mathcal{C}_{\text{col}}$.
\item In addition, every codeword in $\mathcal{C}$ satisfies a set of `$h$' parity equations (referred to as global parities) given by
\begin{equation*}
H_{\text{glob}} \ \text{Vec}(C) = \bold{0},
\end{equation*}
where Vec$(C)$ is obtained by vectorizing the codeword $C$ (matrix of size $m \times n$) by reading row after row.
\end{enumerate}
A topology $T_{m,n}(a, b, h)$ with $h=0$ will be referred to as product topology.
\end{defn}

\begin{defn}[Recoverable Erasure Pattern for Topology $T_{m,n}(a, b, h)$]

An erasure pattern $E \subseteq [m] \times [n]$ is said to be recoverable erasure pattern for topology $T_{m,n}(a, b, h)$ if there exist a code $\mathcal{C}$ instantiating the topology $T_{m,n}(a, b, h)$ such that $\text{dim}(\mathcal{C}|_{D \setminus E}) = \text{dim}(\mathcal{C})$, where $D = [m] \times [n]$ and $\mathcal{C}|_{D \setminus E}$ is the code obtained by puncturing $\mathcal{C}$ to coordinates in $D \setminus E$.

\end{defn}

Let us denote the set of all recoverable erasure patterns for topology $T_{m,n}(a, b, h)$ by $\mathcal{E}$.

\begin{defn}[Maximally Recoverable Code for Topology $T_{m,n}(a, b, h)$]

A code $\mathcal{C}$ is said to be maximally recoverable code for topology $T_{m,n}(a, b, h)$ if $\mathcal{C}$ instantiates topology $T_{m,n}(a, b, h)$ and for all $E \in \mathcal{E}$, $\text{dim}(\mathcal{C}|_{D \setminus E}) = \text{dim}(\mathcal{C})$.

\end{defn}

MRC for grid-like topologies have been studied in \cite{GopHu} and a super-polynomial lower bound on the field size of these MRCs has been derived. MRC for grid-like topologies which can recover from all bounded erasures (bounded by a constant) have been investigated in \cite{GanGri}.
In \cite{GopHuaJenYek}, explicit MRC for $T_{m,n}(1,0,h)$ are constructed over a field size of the order of $n^{h-1}$, the order is calculated assuming that  $h,r$ are constants. 
The constructions of MRC (also known as partial-MDS codes) over small field sizes for the case of $h=2$ and $h=3$ have been studied in \cite{BlaPla}, \cite{CheShu}, \cite{HuYek}.
For general $h$ and the case of two local codes, MRC have been constructed in \cite{HuYek} over a field size of the order of $n^{\frac{h}{2}}$.

We will now present some more definitions and results from \cite{GopHu} which are relevant to this paper.

\begin{prop}[\cite{GopHu}] \label{prop:MRC_2D}
If $\mathcal{C}$ is a maximally recoverable code for topology $T_{m,n}(a, b, h)$, the following are satisfied:
\begin{itemize}
\item $\text{dim}(\mathcal{C}) = (m-a)(n-b) -h$.
\item Let $U \subseteq [m], V \subseteq [n]$ such that $|U| = m-a, |V| = n-b$, then $\mathcal{C}|_{U \times V}$ is an MDS code with parameters $[(m-a)(n-b), (m-a)(n-b)-h, h+1]$.
\item If $\max{(m-a,n-b)} \leq (m-a)(n-b) - h$, then $\mathcal{C}_{\text{row}}$ is an MDS code with parameters $[n, n-b, b+1]$ and $\mathcal{C}_{\text{col}}$ is an MDS code with parameters $[m, m-a, a+1]$.
\end{itemize}

\end{prop}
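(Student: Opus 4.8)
The plan is to derive all three bullets from a single family of ``skeleton'' erasure patterns together with the maximal-recoverability hypothesis. Fix any $U\subseteq[m]$, $V\subseteq[n]$ with $|U|=m-a$ and $|V|=n-b$, write $N=(m-a)(n-b)$, and for any $E_0\subseteq U\times V$ with $|E_0|\le h$ consider the pattern $E=(D\setminus(U\times V))\cup E_0$. The first step is to show every such $E$ is recoverable. I would do this by exhibiting one explicit witness code: take $\mathcal{C}_{\text{row}}$ and $\mathcal{C}_{\text{col}}$ to be Reed--Solomon (MDS) codes of the prescribed dimensions, form the product code $\mathcal{C}_{\text{col}}\otimes\mathcal{C}_{\text{row}}$ of dimension $N$, and impose $h$ global parities chosen so that the restriction to $U\times V$ is an MDS $[N,N-h]$ code. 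Over a large enough field such a choice exists (a finite set of minors must be nonzero, a Schwartz--Zippel / genericity argument). For this witness the surviving symbols $(U\times V)\setminus E_0$ number at least $N-h$ and hence contain an information set of the MDS subgrid code, so they determine the codeword on $U\times V$ and, via the product structure and the globals, on all of $D$; thus $E\in\mathcal{E}$.

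Given this, the first bullet follows quickly. Applying maximal recoverability to $E$ with $|E_0|=h$ yields $\dim(\mathcal{C})=\dim(\mathcal{C}|_{(U\times V)\setminus E_0})\le |(U\times V)\setminus E_0| = N-h$, the desired upper bound. For the matching lower bound I would invoke that the witness code above already instantiates the topology with dimension exactly $N-h$, so the maximum attainable dimension is $N-h$ and an MRC, being of maximal dimension among instantiating codes, meets it. This is the one place where I must appeal to the convention that an MRC is taken to have maximal dimension; without it the zero code would vacuously satisfy the recovery condition.

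The second bullet is then almost immediate. Since $\dim(\mathcal{C}|_{U\times V})=\dim(\mathcal{C})=N-h$ by the first bullet, the restriction $\mathcal{C}\to\mathcal{C}|_{U\times V}$ is a dimension-preserving isomorphism, giving a length-$N$ code of dimension $N-h$. Its minimum distance is at least $h+1$, because the skeleton patterns with $|E_0|=h$ (any $h$ interior erasures) are recoverable and the MRC corrects them; the Singleton bound forces distance exactly $h+1$, so $\mathcal{C}|_{U\times V}$ is MDS with the claimed parameters.

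For the third bullet I would argue that every $(n-b)$-subset of columns is an information set of $\mathcal{C}_{\text{row}}$ (and symmetrically for $\mathcal{C}_{\text{col}}$). Fix a column set $V$ of size $n-b$ and a row $i$ contained in some $U$ of size $m-a$. In the MDS subgrid code $\mathcal{C}|_{U\times V}$ of dimension $N-h$, any set of at most $N-h$ coordinates is contained in an information set and so projects surjectively onto the corresponding space; the $n-b$ coordinates $\{i\}\times V$ qualify precisely because the hypothesis gives $n-b\le N-h$. Hence $\mathcal{C}$ surjects onto $\mathbb{F}_q^{\,n-b}$ on $\{i\}\times V$, and since each row of a codeword lies in $\mathcal{C}_{\text{row}}$ this forces $\mathcal{C}_{\text{row}}|_V=\mathbb{F}_q^{\,V}$ for every such $V$. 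I expect the main obstacle to be upgrading this ``every $(n-b)$-subset spans'' statement to full MDS-ness, i.e.\ ruling out that $H_{\text{row}}$ is rank-deficient (equivalently, that $\mathcal{C}_{\text{row}}$ carries a nonzero codeword of weight $\le b$ and hence dimension exceeding $n-b$). Surjectivity alone does not exclude this, so one must additionally produce erasure patterns that isolate the row (resp.\ column) code---erasing whole columns so that the only route to recovery runs through the row code---and use the dimension hypothesis $\max(m-a,n-b)\le N-h$ to guarantee these patterns are themselves recoverable. Once the local codes are shown to have the exact dimensions $n-b$ and $m-a$, the surjectivity statement turns every $(n-b)$-subset (resp.\ $(m-a)$-subset) into an information set, i.e.\ $\mathcal{C}_{\text{row}}$ is MDS $[n,n-b,b+1]$ and $\mathcal{C}_{\text{col}}$ is MDS $[m,m-a,a+1]$.
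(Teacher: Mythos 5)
The paper never proves Proposition~\ref{prop:MRC_2D}: it is imported verbatim from \cite{GopHu}, so there is no in-paper argument to compare yours against, and your proposal has to be judged on its own. For the first two bullets it holds up and is the standard route: the skeleton patterns $(D\setminus(U\times V))\cup E_0$ with $|E_0|\le h$ are indeed recoverable (your Reed--Solomon product code with $h$ generic global parities is a valid witness), which forces $\dim(\mathcal{C})\le N-h$ and makes $\mathcal{C}|_{U\times V}$ an $[N,N-h]$ code that corrects any $h$ erasures, hence MDS by Singleton. You are also right that the matching lower bound $\dim(\mathcal{C})\ge N-h$ is not derivable from the paper's literal definition (the zero code satisfies it vacuously) and needs the convention that an MRC has maximal dimension among instantiating codes; flagging that is correct, not a defect.

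The genuine gap is in the third bullet, exactly where you flag it, and your sketched repair does not close it. Your surjectivity argument gives $\mathcal{C}_{\text{row}}|_V\supseteq \mathcal{C}|_{\{i\}\times V}=\mathbb{F}_q^{\,n-b}$ for every $V$ of size $n-b$, i.e.\ $\dim(\mathcal{C}_{\text{row}})\ge n-b$ with full projections; to get MDS-ness you still need $\dim(\mathcal{C}_{\text{row}})=n-b$ exactly, equivalently $\operatorname{rank}(H_{\text{row}})=b$, and the paper's Definition~\ref{defn:topology} explicitly allows the $b$ row parities to be dependent, so this is not automatic. Your proposed fix---``erasing whole columns so that the only route to recovery runs through the row code''---does not deliver it: a nonzero vector of $\ker(H_{\text{row}})$ of weight $\le b$ need not extend to a codeword of $\mathcal{C}$ supported inside any single recoverable erasure pattern (a codeword whose row-$i$ restriction has low weight can have arbitrary support in the other rows), so its existence cannot be contradicted by recoverability of column- or row-local patterns alone. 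The missing step is really a normalization: one must either take $H_{\text{row}}$, $H_{\text{col}}$ of full rank without loss of generality (so that $\mathcal{C}_{\text{row}}$, $\mathcal{C}_{\text{col}}$ have dimensions exactly $n-b$ and $m-a$, after which your surjectivity statement immediately makes every $(n-b)$-subset, resp.\ $(m-a)$-subset, an information set), or reinterpret $\mathcal{C}_{\text{row}}$ as the span of the codeword rows and prove the upper bound on its dimension separately. As written, your argument proves only half of the third bullet.
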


\begin{defn}[Irreducible Erasure Pattern for Topology $T_{m,n}(a, b, h)$]
An erasure pattern $E \subseteq [m] \times [n]$ is said to be row-wise irreducible for topology $T_{m,n}(a, b, h)$ if for any row having nonzero erasures, the number of erasures in the row is $ \geq b+1$.
An erasure pattern is said to be column-wise irreducible if for any column having nonzero erasures, the number of erasures in the column is $\geq a+1$. An erasure pattern is said to be irreducible if it is both row-wise and column-wise irreducible.
\end{defn}

\begin{defn}[Regular Erasure Pattern for Topology $T_{m,n}(a, b, 0)$] \label{defn:regular}
An erasure pattern $E \subseteq [m] \times [n]$ is said to be regular for topology $T_{m,n}(a, b, h=0)$ if the following condition is satisfied:
\begin{equation} \label{eq:regular}
| E \cap (U \times V) |  \leq  uv - \max(u-a,0) \max(v-b,0),
\end{equation}
where $U \subseteq [m]$, $V \subseteq [n]$ and $|U| = u$, $|V| = v$.
\end{defn}

\begin{thm} \label{thm:regular_nec}
For any topology $T_{m,n}(a, b, 0)$, if an erasure pattern is not regular, then it is not recoverable.
\end{thm}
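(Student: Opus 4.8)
The plan is to prove the contrapositive in the following operational form: I will fix an arbitrary code $\mathcal{C}$ instantiating $T_{m,n}(a,b,0)$ and, assuming $E$ violates the regularity condition \eqref{eq:regular}, exhibit a nonzero codeword of $\mathcal{C}$ whose support lies entirely inside $E$. Such a codeword lies in the kernel of the puncturing map to $D\setminus E$, forcing $\dim(\mathcal{C}|_{D\setminus E}) < \dim(\mathcal{C})$; since $\mathcal{C}$ was arbitrary, no instantiating code can recover $E$, which is exactly the assertion that $E$ is not recoverable. The first step is to record that for product topology ($h=0$) a code instantiating $T_{m,n}(a,b,0)$ is the tensor product $\mathcal{C}=\mathcal{C}_{\text{col}}\otimes\mathcal{C}_{\text{row}}$, where $\mathcal{C}_{\text{col}}=\ker H_{\text{col}}$ and $\mathcal{C}_{\text{row}}=\ker H_{\text{row}}$ have dimensions at least $m-a$ and $n-b$ respectively.

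Let $U\subseteq[m]$, $V\subseteq[n]$ with $|U|=u$, $|V|=v$ be the sets witnessing non-regularity, so that $|E\cap(U\times V)|>uv-\max(u-a,0)\max(v-b,0)$. The key lemma I would establish is a lower bound on the dimension of the space $S$ of codewords of $\mathcal{C}$ supported inside $U\times V$, namely $\dim S \ge \max(u-a,0)\max(v-b,0)$. To see this, I consider the shortened component codes $\mathcal{C}_{\text{col}}^{U}=\{x\in\mathcal{C}_{\text{col}}:\mathrm{supp}(x)\subseteq U\}$ and $\mathcal{C}_{\text{row}}^{V}=\{y\in\mathcal{C}_{\text{row}}:\mathrm{supp}(y)\subseteq V\}$. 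Shortening a code of dimension at least $m-a$ to the coordinate set $U$ imposes at most $m-u$ vanishing constraints, so $\dim \mathcal{C}_{\text{col}}^{U}\ge \max(u-a,0)$, and likewise $\dim \mathcal{C}_{\text{row}}^{V}\ge \max(v-b,0)$. Every matrix in $\mathcal{C}_{\text{col}}^{U}\otimes\mathcal{C}_{\text{row}}^{V}$ is supported inside $U\times V$ and lies in $\mathcal{C}$, so $S\supseteq \mathcal{C}_{\text{col}}^{U}\otimes\mathcal{C}_{\text{row}}^{V}$ has dimension at least $\max(u-a,0)\max(v-b,0)$.

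The proof then concludes with a dimension count inside $S$. Imposing on the codewords of $S$ the requirement that they vanish on every unerased coordinate of the subgrid, i.e.\ on $(U\times V)\setminus E$, adds at most $uv-|E\cap(U\times V)|$ linear constraints. Hence the subspace $S'$ of codewords supported on $E\cap(U\times V)$ satisfies
\begin{equation*}
\dim S' \ \ge\ \max(u-a,0)\max(v-b,0)\ -\ \bigl(uv-|E\cap(U\times V)|\bigr),
\end{equation*}
which is strictly positive precisely because $E$ is non-regular. Any nonzero element of $S'$ is the required codeword, supported on $E\cap(U\times V)\subseteq E$.

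The step I expect to be the main obstacle is the first one: pinning down that the relevant code is the full tensor product $\mathcal{C}_{\text{col}}\otimes\mathcal{C}_{\text{row}}$, so that the tensor-product subcode used above genuinely lies in $\mathcal{C}$, together with verifying the shortened-code dimension bounds. A careless reading of the definition of recoverability would otherwise admit degenerate subcodes (e.g.\ the zero code) that trivially ``recover'' every pattern, so this identification is where the argument must be anchored. Once the instantiating code and the dimensions of the shortened codes $\mathcal{C}_{\text{col}}^{U}$, $\mathcal{C}_{\text{row}}^{V}$ are correctly fixed, the remaining tensor and counting arguments are routine.
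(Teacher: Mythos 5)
Your proposal is correct. Note that the paper states Theorem \ref{thm:regular_nec} without proof, importing it from \cite{GopHu}; your argument is essentially the standard one from that reference: identify any instantiation of $T_{m,n}(a,b,0)$ with the full product code $\mathcal{C}_{\text{col}}\otimes\mathcal{C}_{\text{row}}$ (a reading the paper itself adopts in Section \ref{sec:alternate_proof}), exhibit the subcode $\mathcal{C}_{\text{col}}^{U}\otimes\mathcal{C}_{\text{row}}^{V}$ of dimension at least $\max(u-a,0)\max(v-b,0)$ supported on $U\times V$, and conclude by the counting argument that a violation of \eqref{eq:regular} leaves a nonzero codeword supported inside $E$. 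Your explicit flagging of the degenerate-subcode issue in Definition \ref{defn:topology} is well placed, since that is exactly the point on which the contrapositive hinges.
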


\begin{thm} \label{thm:regular_suf}
For the topology $T_{m,n}(a=1, b, 0)$, if an erasure pattern is regular, then it is recoverable.
\end{thm}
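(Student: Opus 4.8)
The plan is to exhibit an explicit code instantiating $T_{m,n}(1,b,0)$ that corrects $E$, by taking the product code $\mathcal{C} = \mathcal{C}_{\text{col}} \otimes \mathcal{C}_{\text{row}}$ whose generator matrix is the tensor product $G = G_{\text{col}} \otimes G_{\text{row}}$, with $G_{\text{col}}$ an $(m-1)\times m$ generator of an MDS single-parity column code and $G_{\text{row}}$ an $(n-b)\times n$ generator of an MDS row code. Since $\dim \mathcal{C} = (m-1)(n-b)$ by Proposition~\ref{prop:MRC_2D}, the pattern $E$ is recoverable exactly when $G|_{D\setminus E}$ retains full row rank $(m-1)(n-b)$; equivalently, there is a set $S$ of $(m-1)(n-b)$ surviving coordinates for which the square submatrix $G|_S$ is nonsingular. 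I would first reduce to a purely algebraic question: treat the nonzero entries of $G_{\text{col}}$ and $G_{\text{row}}$ as indeterminates (to be specialized to a Vandermonde form guaranteeing the MDS property), so that the relevant determinant becomes a polynomial over the base field; it then suffices to show that for a suitable $S$ this determinant is not identically zero, after which the Schwartz--Zippel lemma yields a valid assignment over any sufficiently large field.

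The combinatorial core is the choice of $S$, which I would drive by the matching guaranteed earlier. Dualizing, recoverability is equivalent to the $|E|$ columns of the parity-check matrix indexed by $E$ being linearly independent, and the natural bipartite graph pairs each erased coordinate $(i,j)$ with the resources that can eliminate it: one of the $b$ parity checks of row $i$, or the single parity check of column $j$. I would show that the regularity condition (\ref{eq:regular}), specialized to $a=1$, is precisely Hall's condition for this capacitated graph: for any $S' \subseteq E$ spanning rows $U$ and columns $V$, regularity gives $|S'| \le |E \cap (U\times V)| \le |U| b + |V| - b \le b|U| + |V|$, which is exactly the statement that the resources adjacent to $S'$ (namely $b$ per touched row and $1$ per touched column) have enough capacity. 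Hall's theorem then produces an assignment of the erased coordinates to distinct parity checks in which each row contributes at most $b$ of its checks and each column at most one, selecting an $|E|\times|E|$ submatrix $M$ whose matched ``diagonal'' is a product of generically nonzero entries of $G_{\text{col}}$ and $G_{\text{row}}$.

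It remains to prove that $\det M \not\equiv 0$, and this is where I expect the main difficulty to lie. Because the tensor-product entries are products rather than independent variables, a favorable Leibniz term can in principle be cancelled by others producing the same monomial, so the existence of a matching alone is not enough. To control this I would order the erased coordinates and their matched checks so that $M$ becomes block-structured: the column-type checks (at most one per column, since $a=1$) and the row-type checks decouple, and within each row the row-matched coordinates contribute a generalized Vandermonde block in the evaluation points of $G_{\text{row}}$, while the column-matched coordinates contribute scalar factors from $G_{\text{col}}$. Establishing that each such block has nonvanishing determinant requires a second layer of matchings inside the erasure sub-pattern of each row (and column), chosen so that the surviving generalized Vandermonde minors are provably nonzero; assembling these block determinants and checking that their product is the unique surviving monomial of $\det M$ is the crux.

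Finally, I would specialize $G_{\text{col}}$ and $G_{\text{row}}$ to Vandermonde matrices, invoke the non-vanishing of $\det M$ as a polynomial, and apply Schwartz--Zippel to obtain, over a field of size exceeding the relevant degree, a concrete code for which the $E$-columns of the parity-check matrix are independent; this code recovers $E$, proving the statement. The hardest step is the third one---turning the abstract matching into a genuine non-vanishing determinant by taming the cancellations inherent to the tensor-product structure---whereas the reduction to a full-rank condition and the Hall-type derivation of the matching from regularity are comparatively routine.
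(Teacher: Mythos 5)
Your overall skeleton matches the paper's: build the product code $G_{\text{col}}\otimes G_{\text{row}}$ with a single-parity column code, reduce recoverability to the non-vanishing of a polynomial determinant, extract a matching from the regularity condition via Hall's theorem, and finish with Schwartz--Zippel. Your Hall-type computation ($|E\cap(U\times V)|\le ub+v-b$ for $a=1$) is correct and is essentially the same use of regularity as the paper's Lemma \ref{lem:general_a}. However, there is a genuine gap exactly where you say you ``expect the main difficulty to lie'': you never actually establish $\det M\not\equiv 0$. You correctly observe that a matched diagonal does not suffice because the tensor-product entries are products of variables and a favorable Leibniz term can be cancelled; but your proposed remedy --- specializing to Vandermonde matrices, decoupling row-type and column-type checks into blocks, and invoking ``generalized Vandermonde minors'' controlled by ``a second layer of matchings'' --- is only a plan, with no construction of those matchings, no proof that the blocks decouple, and no argument that the resulting monomial survives. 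Note also that with a generic or Vandermonde $G_{\text{row}}$, proving the relevant determinant is a nonzero polynomial is essentially equivalent to the theorem itself (the set of working codes is Zariski-open, so genericity buys nothing until you exhibit one witness), which makes this step circular as stated.

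The paper escapes this trap by \emph{not} using an MDS/Vandermonde row code. It builds a sparse $G_{\text{row}}$ whose zero pattern is tailored to the erasure pattern $E$: one weight-one row per unerased column, $r_i$ rows per erased row $i$ supported on a fixed $b$-subset of $V_i$ plus one extra column, and $t$ fully-supported filler rows, with \emph{all nonzero entries distinct indeterminates}. After deleting the erased columns and permuting rows, the surviving generator submatrix becomes block-triangular with blocks $G_{P'}$, $G_I$, and $G_{Y_i'}$; each block has all-distinct variables, so a single matching per block (supplied by Lemmas \ref{lem:general_a} and \ref{lem:aeq1}) yields a monomial that cannot be cancelled, and the determinant factors as a product of these nonzero block determinants. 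This is the concrete mechanism that tames the cancellations you flagged, and it is the content your proposal is missing. Your dual formulation (independence of the $E$-columns of the parity-check matrix, with erased coordinates matched to row and column checks) is a legitimate alternative framing, but to complete it you would need an analogous explicit structure forcing a unique surviving monomial, which you have not provided.
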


We give a different definition of regular erasure pattern as compared to \cite{GopHu}. The reason for the same is that we would like to categorize all the erasure patterns which are obviously recoverable as regular. Consider the set of erasure patterns $\mathcal{E}' = \{ E| E=U \times V, |U| \leq a \  \text{or} \ |V| \leq b \}$. All the erasure patterns in $\mathcal{E}'$ can be recovered by a code formed by the product of $(m,m-a)$ MDS code and $(n,n-b)$ MDS code, which is a code instantiating $T_{m,n}(a, b, 0)$. According to Definition \ref{defn:regular}, all these patterns are regular as well. Based on this, we rewrite the conjecture in \cite{GopHu} as follows:

\begin{conj}
For the topology $T_{m,n}(a, b, 0)$, if an erasure pattern is regular (regular according to Definition \ref{defn:regular}), then it is recoverable.
\end{conj}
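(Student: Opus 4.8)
The plan is to prove the sufficiency direction by an explicit construction. I would fix a single pair of MDS generator matrices, $G_{\text{col}}$ of size $(m-a)\times m$ and $G_{\text{row}}$ of size $(n-b)\times n$, form the tensor product $G = G_{\text{col}} \otimes G_{\text{row}}$, and argue that the product code $\mathcal{C}$ it generates corrects every regular pattern. Since the product code has dimension $(m-a)(n-b)$ (consistent with Proposition \ref{prop:MRC_2D} at $h=0$), recoverability of $E$ is equivalent to the columns of $G$ indexed by $D\setminus E$ having full row rank $(m-a)(n-b)$; equivalently, $G$ should contain a nonsingular $(m-a)(n-b)\times(m-a)(n-b)$ submatrix all of whose columns avoid $E$. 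The whole problem thus reduces to the following: for every regular $E$, can we select $(m-a)(n-b)$ surviving cells indexing a nonvanishing maximal minor of $G$?

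First I would translate the regularity inequality \eqref{eq:regular} into a combinatorial matching statement. Quantified over all sub-grids $U\times V$, \eqref{eq:regular} reads as a Hall-type (defect) condition, so I would build a bipartite graph whose left vertices are the erased cells of $E$ and whose right vertices are the ``recovery resources'' --- the $b$ row-parities attached to each row together with the $a$ column-parities attached to each column --- joining a cell to a resource lying in its row or column. I would then verify that \eqref{eq:regular} is exactly Hall's condition for this graph and conclude, by the marriage theorem, that a complete matching saturating $E$ exists. This matching assigns to each erased cell a distinct parity constraint responsible for it and, dually, identifies a candidate set of $(m-a)(n-b)$ surviving cells to serve as pivots.

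Second, with the matching in hand I would show the corresponding maximal minor of $G = G_{\text{col}}\otimes G_{\text{row}}$ is nonzero. Taking $G_{\text{col}},G_{\text{row}}$ to be generic (indeterminate) MDS generators guarantees that every square submatrix of $G_{\text{col}}$ of order $m-a$ and of $G_{\text{row}}$ of order $n-b$ is nonsingular. Expanding the selected minor by Cauchy--Binet along the tensor structure yields a sum of products of such minors, indexed by the ways of splitting the chosen cells across the row- and column-directions; the matching singles out one monomial in this expansion, and the task is to show it is not cancelled by the others, so that the minor is a nonzero polynomial and hence nonzero over a large enough field. For $a=1$ each column carries only a single parity, the split is essentially forced, and the surviving term can be isolated directly --- this is what yields Theorem \ref{thm:regular_suf}. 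For $a=2$ one needs an auxiliary family of matchings on the erasure sub-patterns to control the extra column degrees of freedom, and the argument only covers a restricted class of patterns.

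The main obstacle is precisely this non-cancellation step for $a\ge 2$. Existence of the matching, and hence of a candidate pivot set, follows cleanly from \eqref{eq:regular} for all $a,b$; but converting a combinatorial matching into a guaranteed nonzero determinant requires ruling out algebraic cancellation among the many Cauchy--Binet terms, whose number and interaction grow with $a$. This is why the general statement remains a conjecture: the matching machinery certifies the required combinatorial structure in full generality, yet an algebraic full-rank certificate is presently available only for $a=1$ and, partially, for $a=2$.
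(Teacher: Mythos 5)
The statement you were asked to prove is stated in the paper as a \emph{conjecture}: the paper itself offers no proof of it, only a proof for $a=1$ (Theorem \ref{thm:regular_suf}) and a partial result for $a=2$ restricted to extended erasure patterns. Your proposal likewise stops short of a proof and says so explicitly, so on the headline claim you and the paper are in the same place, and you have correctly identified where the difficulty lies (turning the combinatorial matching into a non-cancellation certificate for the determinant when $a\ge 2$). That part of your assessment is accurate.

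Within the sketch itself there are two substantive deviations from the paper, one of which is a genuine error. First, your bipartite graph (erased cells on the left, the $b$ row-parities per row and $a$ column-parities per column on the right) does \emph{not} have Hall's condition equivalent to \eqref{eq:regular}. For the set $A$ of erased cells inside a touched subgrid $U\times V$ with $|U|=u\ge a$, $|V|=v\ge b$, one has $|N(A)|=ub+va$, whereas regularity demands $|E\cap(U\times V)|\le ub+va-ab$; the missing $ab$ reflects dependencies among the parities that your graph does not encode. Concretely, the full grid $E=U\times V$ with $u=a+1$, $v=b+1$ satisfies your Hall condition ($ (a+1)(b+1)\le 2ab+a+b$) but violates \eqref{eq:regular} and is not recoverable, so a matching in your graph certifies nothing. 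The paper instead uses two different graphs --- erasures in $u-a$ rows versus non-erasures in the remaining $a$ rows (Construction \ref{constr:general_a}), and rows versus columns of a sub-pattern (Construction \ref{constr:aeq1}) --- for which applying \eqref{eq:regular} to a well-chosen subgrid is exactly Hall's condition. Second, you take $G_{\text{col}},G_{\text{row}}$ to be generic MDS generators and expand by Cauchy--Binet; the paper instead tailors a sparse $G_{\text{row}}$ to the pattern $E$ (indeterminates only in positions dictated by the erasures, zeros elsewhere) and takes $G_{\text{col}}=[\,\underline{1}\ \ I_{m-1}\,]$, so that $G|_{D\setminus E}$ block-triangularizes and the determinant factors as $\det(G_{P'})\det(G_I)^{m-1}\prod_i\det(G_{Y'_i})$, with non-cancellation reduced to distinctness of indeterminates within each block (Lemma \ref{lem:mat_match}). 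With fully generic dense factors, the monomial-isolation step you describe is precisely where cancellation among Cauchy--Binet terms is hard to exclude, even for $a=1$; the sparsity of the paper's row code is not incidental but is what makes the $a=1$ argument close.
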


\subsection{Our Contributions}

\begin{itemize}
\item For general product topology, we construct a bipartite graph between a subset of rows of erasures and non-erasures in a disjoint subset of rows. We prove that for a row-wise irreducible, regular erasure pattern, there exists a complete matching in this graph.  For the case of $a=1$, we construct another bipartite graph between rows and columns of erasure sub-patterns and prove a certain neighbourhood property of this graph (Section \ref{sec:bipartite}).

\item We will give an alternate proof of the sufficiency of regularity for $a=1$ case (Theorem \ref{thm:regular_suf}). We consider the generator matrix $G$ of the product code and expand it as tensor product $G_{\text{col}} \otimes G_{\text{row}}$ of generator matrices of column and row codes. We prove that a certain square submatrix of this tensor product is full rank, by applying the properties of bipartite graphs which we derived. (Section \ref{sec:alternate_proof}).

\item We consider a subset of regular erasure patterns for the case of $a=2$, which are obtained by extending regular erasure patterns for $a=1$. We prove that these regular erasure patterns are also recoverable. (Section \ref{sec:aeq2})

\end{itemize}

\section{Bipartite Graphs for Regular, Irreducible Erasure Patterns} \label{sec:bipartite}

In this section, we construct two bipartite graphs based on an erasure pattern and derive some properties of these graphs.

\begin{constr}[Bipartite Graph between erasures and non-erasures for general $a \geq 1$] \label{constr:general_a}
Consider a row-wise irreducible erasure pattern $E$ with enclosing grid $U \times V \subseteq [m] \times [n], |U| = u, |V| =v$, where enclosing grid is used to refer to the smallest grid containing the erasure pattern $E$. 
Assuming that the elements of $U$ are sorted, let the erasure pattern be such that each row has $b+r_i, i \in U$ erasures. Let $U_{L} \subseteq U$ be arbitrary subset of $u-a$ elements and $U_R = U  \setminus U_L$.
We construct a bipartite graph as follows:
\begin{itemize}
\item For each $i \in U_L$, we create $r_{i}$ vertices on the left. The $r_{i}$ left vertices corresponding to $i \in U_L$ are denoted by $e(i,1), e(i,2), \ldots, e(i, r_i)$. Hence, the total number of vertices on the left are $\sum_{i \in U_L} r_i$. 
\item Each vertex on the right corresponds to one non-erasure in the rows $U_R$. Let there be $w$ non-erasures in the rows $U_R$. The vertices on the right are denoted by $d_1, d_2, \ldots, d_w$.
\item We place an edge between a left vertex $e(i,j)$ and a right vertex $d_{\ell}$ if there exists an erasure in the position $(s,t) \in [m] \times [n]$ where $s$ is the row number of the erasure $e(i,j)$ and $t$ is the column number of the non-erasure $d_{\ell}$. 
\end{itemize}
\end{constr}


\begin{lem} \label{lem:general_a}
If an erasure pattern is regular and row-wise irreducible for topology $T_{m,n}(a,b,0)$, then there exists a complete matching\footnote{By complete matching in a bipartite graph, we refer to a matching in which all the left vertices are included. In this paper, whenever we refer to matching in a bipartite graph, we mean complete matching.} in the bipartite graph (for the erasure pattern) resulting from Construction \ref{constr:general_a}.
\end{lem}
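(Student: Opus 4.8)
The plan is to establish the complete matching via Hall's marriage theorem: such a matching exists if and only if $|N(S)| \ge |S|$ for every subset $S$ of left vertices, where $N(S)$ denotes the neighborhood of $S$ in the bipartite graph of Construction \ref{constr:general_a}. So the entire task reduces to verifying this neighborhood inequality, and the main work is to phrase it in a form to which the regularity condition \eqref{eq:regular} can be applied directly.

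The first observation I would make is that the adjacency of a left vertex $e(i,j)$ is independent of the index $j$. By the edge rule, $e(i,j)$ is joined to a right vertex $d_\ell$ lying in column $t'$ precisely when $(i,t') \in E$, and this condition depends only on the row $i$. Hence the $r_i$ copies $e(i,1),\dots,e(i,r_i)$ all share a common neighborhood. Consequently it suffices to verify Hall's condition on subsets that are unions of \emph{complete} rows: given an arbitrary $S$, let $T \subseteq U_L$ be the set of rows represented in $S$; replacing $S$ by the row-complete set $\{ e(i,j) : i \in T,\ 1 \le j \le r_i \}$ leaves $N(S)$ unchanged while only increasing $|S|$, so the worst case is row-complete. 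For such $S$ we have $|S| = \sum_{i \in T} r_i$.

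Next I would identify $N(S)$ with a count of non-erasures in a subgrid. Let $W = \{ t \in V : (i,t) \in E \text{ for some } i \in T \}$ be the set of columns carrying an erasure in the rows of $T$. Then $N(S)$ consists exactly of the non-erasures of $E$ lying in $U_R \times W$, so that, using $|U_R| = a$,
\begin{equation*}
|N(S)| = a\,|W| - |E \cap (U_R \times W)|.
\end{equation*}
Since each row $i \in T$ has $b + r_i$ erasures, all located in columns of $W$, we also obtain $\sum_{i \in T} r_i = |E \cap (T \times W)| - b\,|T|$. Applying the regularity bound \eqref{eq:regular} to the grid $(T \cup U_R) \times W$, of dimensions $|T|+a$ and $|W|$, and writing $|E \cap ((T \cup U_R)\times W)| = |E \cap (T \times W)| + |E \cap (U_R \times W)|$, the right-hand side simplifies: row-wise irreducibility forces $|W| \ge b+1$, so $\max(|W|-b,0)=|W|-b$ and the bound becomes $a\,|W| + b\,|T|$. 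Substituting the two identities above and cancelling the common terms $a\,|W|$ and $b\,|T|$ collapses the inequality exactly to $|N(S)| \ge \sum_{i\in T} r_i = |S|$, which is Hall's condition.

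The step I expect to be the main obstacle is the bookkeeping in this reduction: justifying that only row-complete subsets $S$ need be tested, and then pinning down $N(S)$ as the non-erasures of the specific subgrid $U_R \times W$, so that a single application of regularity to $(T \cup U_R) \times W$ produces precisely the required bound. Once the neighborhood is written in this form, the remaining manipulation is routine algebra, and the invocation of row-wise irreducibility to discharge the $\max(\cdot,0)$ term is the only other place care is needed.
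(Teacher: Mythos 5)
Your proposal is correct and follows essentially the same route as the paper: verifying Hall's condition by reducing to row-complete subsets (since all $r_i$ copies of a row share one neighborhood) and applying the regularity bound \eqref{eq:regular} to the grid $(T \cup U_R) \times W$, where $W$ is the column support of the erasures in the chosen rows, which is exactly the paper's $(U_S \cup U_R) \times V_T$ computation. The only difference is cosmetic: you perform the reduction to row-complete sets up front, whereas the paper handles partial subsets at the end.
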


\begin{figure}[h]
\bean
\begin{array}{c||c|c|c|c|c|c|c|c|c|c||}
& 1 & 2 & 3 & 4 & 5 & 6 & 7 & 8 & 9 & 10\\ 
\hline \hline
1 & & & &  &  & & \times  & \times & \times & \times \\
\hline
2 & & & & & &  \times & \times & \times &  & \\
\hline
3 & & & \times & & &  &  &  & \times & \times \\
\hline
4 & & & & \times & \times & \times &  &  &  &    \\
\hline
5 & & & \times & \times &  \times & & &  &  &    \\
\hline
6 & &  & & &  &  & & & &    \\
\hline \hline
\end{array}
\eean
\caption{Example of a regular erasure pattern, $(m,n) = (6,10), (a,b)=(1,2)$. Enclosing grid of the erasure pattern is $[1:5] \times [3:10]$.}
\label{fig:example}
\end{figure}

\begin{figure}[h]
\bean
\begin{array}{c||c|c|c|c|c|c|c|c|c|c||}
& 1 & 2 & 3 & 4 & 5 & 6 & 7 & 8 & 9 & 10\\ 
\hline \hline
1 & & & &  &  & & \times  & \otimes & \otimes & \times \\
\hline
2 & & & & & &  \times & \otimes & \times &  & \\
\hline
3 & & & \times & & &  &  &  & \times & \otimes \\
\hline
4 & & & & \times & \times & \otimes &  &  &  &    \\
\hline
5 & & & \times & \times &  \times & \bigcirc & \bigcirc & \bigcirc & \bigcirc &  \bigcirc  \\
\hline
6 & &  & & &  &  & & & &    \\
\hline \hline
\end{array}
\eean
\caption{The matching in construction II.1.}
\label{fig:bipartite_match}
\end{figure}

\begin{proof}[Proof of Lemma \ref{lem:general_a}]
We will prove that there exists a matching by verifying the Hall's condition. To do so, we consider all the left vertices corresponding to $U_S \subseteq U_L$, where $|U_S| = s$. The number of such vertices on the left are given by $\sum_{i \in U_S} r_i$. Let $U_S \times V_T$ denote the enclosing grid of all the erasures in the rows $U_S$.  Denote $|V_T| = t$. Consider the erasures in the grid $(U_S \cup U_R) \times V_T$ of $s+a$ rows and $t$ columns. Let $x$ denote the number of erasures in the subgrid $U_R \times V_T$. Since the erasure pattern is regular and irreducible, we apply the condition in \eqref{eq:regular} to the grid $(U_S \cup U_R) \times V_T$. Then, we have
\begin{equation}
sb + \sum_{i \in U_S} r_i + x \leq ta + (s+a)b - ab.
\end{equation}
Thus, we have an upper bound on $x$ as $x \leq at -  \sum_{i \in U_S}  r_i$. Thus, the number of non-erasures in these $t$ columns is lower bounded by $p = at - x \geq  \sum_{i \in U_S} r_i$. This proves that  the neighbourhood of a set of size $ \sum_{i \in U_S}  r_i$ is at least $ \sum_{i \in U_S}  r_i$. Hence, for any set $A$ where we consider all the vertices corresponding to any $s$ rows in the bipartite graph, we have that $|N(A)| \geq |A|$. Now, consider the case when we take sets $A$ such that $A$ partially intersects $s$ rows. Since the neighbourhood $N(A)$ in this case is the same as that we would have obtained when we consider all the vertices corresponding to these $s$ rows, it is true that $|N(A)| \geq |A|$ even in this case.

\end{proof}

\begin{constr}[Bipartite Graph between rows and columns for $a=1$] \label{constr:aeq1}
Consider a row-wise irreducible erasure pattern $E$ with enclosing grid $U \times V \subseteq [m] \times [n], |U| = u, |V| =v$. Let $\ell$ denote an arbitrary element of $U$ and the support of $b+r_{\ell}$ erasures in the row given by the set $V_{\ell}$. Consider the erasures in the grid $(U \setminus \ell) \times (V \setminus V_{\ell})$.  We construct a bipartite graph as follows:
\begin{itemize}
\item The vertices on the left correspond to the elements of the set $(U \setminus \ell)$.
\item The vertices on the right correspond to the elements of the set $(V \setminus V_{\ell})$
\item We place an edge between two vertices $i$ and $j$ if the array element $(i,j)$ is erased in $E$.  
\end{itemize}
\end{constr}

\begin{lem} \label{lem:aeq1}
Consider an erasure pattern which is regular and row-wise irreducible for topology $T_{m,n}(a=1,b,0)$.  Consider the bipartite graph (for the erasure pattern) resulting from Construction \ref{constr:aeq1}. The following property holds for this bipartite graph: If $ A \subseteq U \setminus \ell$ (left vertices), then the neighbourhood of $A$, $N(A)$ satisfies $|N(A)| \geq \sum_{i \in A} r_i$.
\end{lem}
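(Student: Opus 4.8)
The plan is to verify the neighbourhood inequality $|N(A)| \ge \sum_{i \in A} r_i$ directly from the regularity condition~\eqref{eq:regular}, exactly in the spirit of the Hall-condition argument already used in the proof of Lemma~\ref{lem:general_a}, but now with $a=1$ and with the bipartite graph of Construction~\ref{constr:aeq1} (rows versus columns of the restricted pattern). First I would fix an arbitrary $A \subseteq U \setminus \ell$ of size $|A| = s$, and let $V_T$ be the set of columns (drawn from $V \setminus V_\ell$) that contain at least one erasure lying in the rows of $A$; by definition $N(A) = V_T$, so the goal becomes $|V_T| \ge \sum_{i \in A} r_i$.

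The key step is to apply regularity to a carefully chosen enclosing subgrid. I would take the row set $A \cup \{\ell\}$, which has $s+1 = s+a$ rows, and the column set $V_T \cup V_\ell$. Writing $|V_T| = t$ and recalling $|V_\ell| = b + r_\ell$, the combined column set has size at most $t + b + r_\ell$ (and the union is in fact disjoint since $V_T \subseteq V \setminus V_\ell$). Now I count the erasures inside this subgrid from below. The row $\ell$ contributes its full $b + r_\ell$ erasures, all of which sit in columns $V_\ell$. Each row $i \in A$ is row-wise irreducible, hence carries at least $b + r_i$ erasures; of these, the $r_i$ ``surplus'' erasures are precisely the ones that, by Construction~\ref{constr:aeq1}, generate edges into $V \setminus V_\ell$ and therefore land in columns of $V_T = N(A)$. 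Thus the erasure count in the subgrid is at least
\begin{equation}
(b + r_\ell) + \sum_{i \in A}\bigl(b + r_i\bigr) = (s+1)b + r_\ell + \sum_{i \in A} r_i.
\end{equation}
On the other hand, regularity applied to this grid of $u' = s+1$ rows and $v' \le t + b + r_\ell$ columns, with $a=1$, gives an upper bound of $u'v' - (u'-1)\max(v'-b,0) = v' + (u'-1)b$ whenever $v' \ge b$. Combining the lower bound with this upper bound and cancelling the $(s+1)b$ terms should isolate $\sum_{i \in A} r_i + r_\ell \le t + r_\ell$, i.e.\ $t \ge \sum_{i \in A} r_i$, which is exactly the claim.

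The main obstacle I anticipate is \emph{bookkeeping the columns correctly} rather than any deep difficulty: I must be sure that the erasures I am counting in the rows of $A$ genuinely lie within the chosen column set (the $b+r_i$ erasures of row $i$ split into those in $V_\ell$ and those in $V_T$, and only the latter are guaranteed to sit in $N(A)$), and I must handle the $\max(\cdot,0)$ term in~\eqref{eq:regular} so that the subgrid indeed has at least $b$ columns (a degenerate case $t + b + r_\ell < b$ cannot occur, but the small-$t$ boundary deserves a line of care). A secondary point is checking that passing from ``all vertices of the $s$ rows'' to an arbitrary $A$ needs no extra work here, since in Construction~\ref{constr:aeq1} each left vertex already \emph{is} a single row, so $A$ is literally a set of rows and $N(A)$ is unambiguous. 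Modulo these checks, the inequality follows by the same one-shot application of regularity that powered Lemma~\ref{lem:general_a}.
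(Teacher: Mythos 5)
Your proposal is correct and follows essentially the same route as the paper: apply the regularity condition \eqref{eq:regular} to the subgrid on rows $A \cup \{\ell\}$ and the columns $V_\ell \cup N(A)$ covering all their erasures, lower-bound the erasure count there by $(s+1)b + r_\ell + \sum_{i \in A} r_i$, and cancel against the regularity upper bound to get $|N(A)| \ge \sum_{i \in A} r_i$. The only slip is the aside that the $r_i$ ``surplus'' erasures of row $i$ are \emph{precisely} the ones landing in $V \setminus V_\ell$ (a row may have fewer than $r_i$ erasures outside $V_\ell$ when they overlap $V_\ell$), but the count you actually use only needs that every erasure of a row in $A \cup \{\ell\}$ lies in $V_\ell \cup N(A)$, which holds, so the argument goes through.
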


%
%
%


\begin{figure}[h]
\centering
 \includegraphics[width=3in]{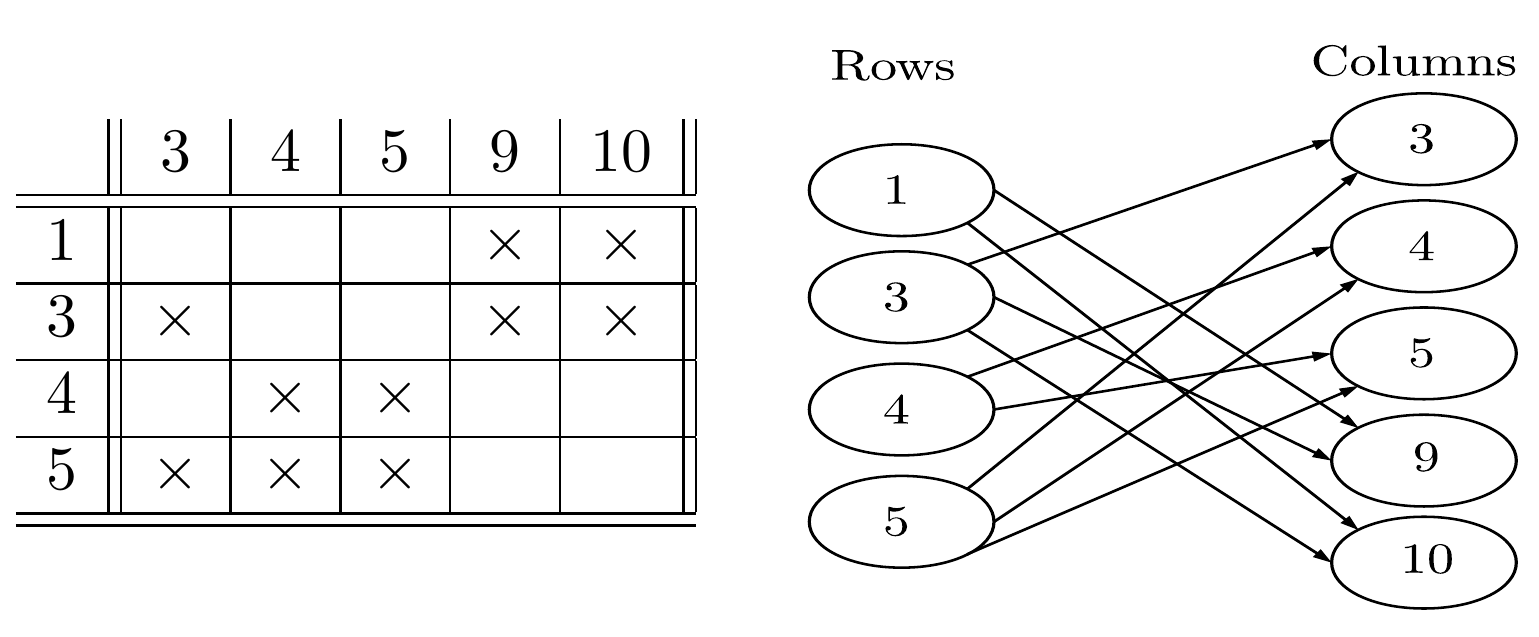}
\caption{Continuing from previous example in Fig. \ref{fig:example}, we have $\ell = 2, V_{\ell} = \{6,7,8\}$, $r_1 = 2, r_3 = 1, r_4=1, r_5=1$. Note that the neighbourhoods of subsets of left vertices satisfy the condition in Lemma \ref{lem:aeq1}.}
\end{figure}

\begin{proof}[Proof of Lemma \ref{lem:aeq1}]
Consider the left vertices corresponding to $U_S \subseteq (U \setminus {\ell}) $, where $|U_S| = s$. Let $(U_S \cup \ell) \times V_T$ denote the enclosing grid of all the erasures in the rows $U_S \cup \ell$.
We note that $|V_{\ell}| = b+r_{\ell}$. Also we denote $|V_T \setminus  V_{\ell}| = t$. Since the erasure pattern is regular and irreducible, we apply the condition in \eqref{eq:regular} to the grid $(U_S \cup \ell) \times V_T$. Then, we have
 \begin{equation}
(s+1)b + \sum_{i \in (U_S \cup \ell) } r_{i}  \leq (t+b+r_\ell) + (s+1)b - b.
\end{equation}
The above equation implies that $t \geq \sum_{i \in U_S} r_{i} \geq s$. 
\end{proof}

We would like to note that for the case of $a=1$, both the above constructions result in the same bipartite graph.

\section{Recoverability of regular erasure patterns for $a=1$} \label{sec:alternate_proof}

In this section, we give an alternate proof for Theorem \ref{thm:regular_suf}. The following two lemmas would be useful in the proving the theorem.

\begin{lem} \label{lem:mat_match}
Consider a square matrix $B$ of size $n \times n$. The matrix consists of zeros at some positions and distinct variables (indeterminates) in the rest of the positions. Consider a bipartite graph constructed based on this matrix as follows:
\begin{itemize}
\item The left vertices correspond to rows.
\item The right vertices correspond to columns.
\item We place an edge between two vertices $i,j$, whenever there is a variable in the position $(i,j)$.
\end{itemize}
If there is a matching in the bipartite graph thus constructed, then $\text{det}(B)$ is a non-zero (multivariate) polynomial and the variables can be assigned values from a large enough finite field $\mathbb{F}_q$ such that the matrix is full rank.
\end{lem}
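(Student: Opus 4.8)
The plan is to prove the two claims (nonvanishing of $\det(B)$ as a polynomial, and the existence of a field assignment making $B$ full rank) in sequence, using the Leibniz expansion of the determinant together with the hypothesized matching. The key observation is that a matching in the bipartite graph is exactly a permutation $\tau \in S_n$ such that every entry $B_{i,\tau(i)}$ is a variable (i.e., nonzero by construction), since the left and right vertex sets both have size $n$ and a matching saturating the left vertices in a balanced bipartite graph is a perfect matching, hence a permutation of $[n]$.

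First I would recall the Leibniz formula
\begin{equation*}
\det(B) = \sum_{\sigma \in S_n} \text{sgn}(\sigma) \prod_{i=1}^{n} B_{i,\sigma(i)}.
\end{equation*}
Each summand is either identically zero (if $B_{i,\sigma(i)} = 0$ for some $i$, i.e.\ the permutation $\sigma$ hits a zero position) or a signed monomial in the indeterminates. Because the variables appearing in distinct positions of $B$ are \emph{distinct} indeterminates, the monomial $\prod_{i=1}^n B_{i,\sigma(i)}$ associated to a surviving permutation $\sigma$ records exactly the multiset of positions $\{(i,\sigma(i))\}$, and two distinct surviving permutations yield distinct monomials (they differ in at least one factor, hence in at least one variable). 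Consequently there can be no cancellation among the surviving terms. The matching $\tau$ guarantees that at least one permutation survives, so $\det(B)$ contains at least one nonzero monomial and is therefore a nonzero multivariate polynomial over any field.

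For the second claim I would invoke a standard nonvanishing argument: since $\det(B)$ is a nonzero polynomial in finitely many variables, of total degree $n$, the Schwartz--Zippel lemma (or simply the fact that a nonzero polynomial over an infinite field is not the zero function, combined with a counting bound for finite fields) ensures that for any field $\mathbb{F}_q$ with $q$ large enough there is an assignment of the indeterminates to elements of $\mathbb{F}_q$ for which $\det(B) \neq 0$. At such an assignment $B$ has full rank $n$.

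The only subtle point — and the step I would state most carefully — is the no-cancellation argument, namely that distinct surviving permutations produce distinct monomials. This is precisely where the hypothesis that the nonzero entries are \emph{distinct} indeterminates (rather than possibly-equal or algebraically-related values) is essential: if entries could coincide, different permutations might contribute equal-and-opposite monomials and the determinant could vanish identically despite the existence of a matching. I do not expect any real obstacle here, since the distinctness is given by hypothesis; the argument is then a direct bookkeeping of monomials. I would therefore present the proof compactly, emphasizing the bijection between surviving permutations and perfect matchings and the injectivity of the monomial map on surviving permutations.
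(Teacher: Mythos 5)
Your proof is correct and follows essentially the same route as the paper's: identify the matching with a permutation whose product of entries is a monomial in the Leibniz expansion, use the distinctness of the indeterminates to rule out cancellation, and invoke Schwartz--Zippel to obtain a full-rank assignment over a large enough field. Your write-up is somewhat more explicit about why distinct surviving permutations give distinct monomials, but the argument is the same.
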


\begin{proof}
If a variable is present in position $(i,j)$, then we denote the variable by $x_{i,j}$. Let $x_{i_1, j_1}, x_{i_2, j_2}, \ldots x_{i_n,j_n}$ be the variables involved in the matching. The determinant of the matrix is a multi-variate polynomial and due to the matching, $\prod_{\ell=1}^n x_{i_\ell j_\ell}$ is one of the monomials adding to the determinant polynomial. $\prod_{\ell=1}^n x_{i_\ell, j_\ell}$ has a nonzero coefficient as no other term in the determinant would give the same monomial. This is due to the fact that all the variables in the matrix are distinct. Hence, the determinant polynomial is a non-zero polynomial. It follows by Schwartz-Zippel Lemma that the indeterminates can be assigned values from a large enough finite field such that the determinant of the matrix is nonzero and hence the matrix is full rank.
\end{proof}

\begin{lem}[\cite{GopHu}] \label{lem:irred}
Consider an erasure pattern $E \subseteq [m] \times [n]$. Let $E' \subseteq E$ be a row-wise irreducible erasure pattern obtained as follows: If $i^{\text{th}}$ row ($1\leq i \leq m$) of $E$ has $\geq b+1$ erasures, then $i^{\text{th}}$ row of $E'$ is identical to $i^{\text{th}}$ row of $E$. All the rest of the rows are non erasures in $E'$. Then $E$ is recoverable if and only if $E'$ is recoverable.
 \end{lem}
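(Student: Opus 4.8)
The plan is to argue both implications by relating the recoverability of $E$ and $E'$ through the rank of a code punctured to the complement of the erasure pattern. Recall that an erasure pattern $F$ is recoverable for $T_{m,n}(a,b,0)$ precisely when there exists a code $\mathcal{C}$ instantiating the topology with $\text{dim}(\mathcal{C}|_{D\setminus F}) = \text{dim}(\mathcal{C})$; equivalently, no nonzero codeword is supported entirely within $F$. Since $E' \subseteq E$, the easy direction is immediate: if $E$ is recoverable by some code $\mathcal{C}$, then no nonzero codeword of $\mathcal{C}$ is supported within $E$, hence certainly none is supported within the smaller set $E'$, so $E'$ is recoverable by the \emph{same} code $\mathcal{C}$. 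Thus one direction needs no new construction.

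For the converse --- if $E'$ is recoverable then $E$ is recoverable --- I would take a code $\mathcal{C}$ that recovers $E'$ and show it also recovers $E$. The key observation is the \emph{structure} of the rows that lie in $E \setminus E'$: by construction, each such row contains at most $b$ erasures of $E$. First I would argue that we may take $\mathcal{C}$ so that $\mathcal{C}_{\text{row}}$ is an $[n, n-b, b+1]$ MDS code (this is consistent with Proposition \ref{prop:MRC_2D}, and one can always choose the recovering code to have MDS row code without affecting recoverability of $E'$). Then, given any received word consistent with the erasures in $E$, I would recover the codeword in two stages: the rows belonging to $E'$ (those with $\geq b+1$ erasures) are handled exactly as in the recovery of $E'$, while each row in $E \setminus E'$ has at most $b$ erasures and is therefore recoverable \emph{locally} using the row parity constraints alone, since $\mathcal{C}_{\text{row}}$ is $[n,n-b,b+1]$ MDS and can correct up to $b$ erasures per row independently of the other rows.

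The cleanest way to phrase this is via the contrapositive on supports: suppose for contradiction that $E$ is \emph{not} recoverable by $\mathcal{C}$, so there is a nonzero codeword $C$ supported within $E$. Every row of $C$ then has support within the erasures of $E$ in that row. For a row in $E \setminus E'$, the support has size $\leq b$, but each row of $C$ satisfies the row code constraints, and an $[n,n-b,b+1]$ MDS row code has no nonzero word of weight $\leq b$; hence every such row of $C$ is identically zero. Therefore $C$ is in fact supported within $E'$, contradicting the recoverability of $E'$. This shows $E$ is recoverable by the same $\mathcal{C}$.

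I expect the main subtlety to be the justification that one may assume $\mathcal{C}_{\text{row}}$ is MDS without loss of generality. Recoverability of $E'$ only asserts the \emph{existence} of \emph{some} instantiating code, and that code need not have an MDS row code a priori. The safe route is to invoke the third bullet of Proposition \ref{prop:MRC_2D}: a \emph{maximally} recoverable code for the topology has MDS row and column codes (under the mild dimension condition $\max(m-a,n-b)\le(m-a)(n-b)$, which holds for $h=0$ whenever the topology is nondegenerate), and since the maximally recoverable code recovers every recoverable pattern --- in particular $E'$ --- it is a legitimate witness with MDS row code. Once this reduction is in place, the weight argument above closes both directions and completes the proof.
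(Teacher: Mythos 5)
The paper does not actually prove this lemma---it is imported from \cite{GopHu} with a citation and no proof---so there is no in-paper argument to compare against; your proposal must stand on its own, and it essentially does. The forward direction (monotonicity of recoverability under taking subsets of the erasure pattern, via the ``no nonzero codeword supported inside the pattern'' reformulation) is correct and needs nothing more. The converse via the contrapositive-on-supports is also the standard argument and is correct: once the row code is $[n,n-b,b+1]$ MDS, any codeword supported in $E$ must vanish on every row of $E\setminus E'$ (weight at most $b$ in an MDS row code of distance $b+1$), hence is supported in $E'$, hence is zero. You correctly identify the one real gap---that a witness code for $E'$ need not have an MDS row code---and your fix via the third bullet of Proposition \ref{prop:MRC_2D} works, but note that this proposition is stated conditionally (``\emph{if} $\mathcal{C}$ is an MRC, then\ldots'') and nowhere does the paper assert that an MRC exists for the topology. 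That existence is standard but is itself a Schwartz--Zippel/genericity fact (finitely many recoverable patterns, each imposing a nonvanishing-determinant condition on a generic instantiation, so over a large enough field a single code satisfies all of them simultaneously along with MDS-ness of $\mathcal{C}_{\text{row}}$); a sentence to this effect, or a direct genericity argument intersecting the dense set of codes recovering $E'$ with the dense set of codes having MDS row code, would close the proof completely.
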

 
\begin{proof}[Proof of Theorem \ref{thm:regular_suf}] 

Based on the above lemma, in order to prove Theorem \ref{thm:regular_suf},  it is enough to consider row-wise irreducible, regular erasure patterns.
In \cite{GopHu}, the proof of Theorem \ref{thm:regular_suf} considered the following two cases:
\begin{itemize}
\item Case 1: $E$ have exactly $b+1$ erasures in each row (which has nonzero erasures). This can be considered as the base case.
\item Case 2: $E$ have $b+r_i, r_i \geq 1, i \in U$ erasures in each row (where $U \times V$ is the enclosing grid of $E$).
\end{itemize}
We will give an alternate proof which unifies both the cases. This proof will be generalized later to the case of $a=2$ for some erasure patterns. 

Consider a row-wise irreducible, regular erasure pattern $E$ which has an enclosing grid of $U \times V$ and has $b+r_i, r_i \geq 1, i \in U$ erasures in each row. If $|U| = 1$, a simple parity check code as the column code will suffice to correct the erasure pattern. So, we assume that $|U| \geq 2$. To prove that $E$ is recoverable, we need to construct a code $\mathcal{C}$ which is an instantiation of topology $T_{m,n}(a=1,b,0)$ such that $\text{dim}(\mathcal{C}|_{D \setminus E}) = \text{dim}(\mathcal{C})$, where $D = [m] \times [n]$. Since $\mathcal{C}$ is an instantiation of topology $T_{m,n}(a=1,b,0)$  and Definition \ref{defn:topology} for $h=0$ case is precisely the definition of product of codes \cite{MacSlo}, we have $\mathcal{C} = \mathcal{C}_{\text{col}} \otimes \mathcal{C}_{\text{row}}$. To construct $\mathcal{C}$, we construct the generator matrices of $\mathcal{C}_{\text{col}}$ and $\mathcal{C}_{\text{row}}$ \cite{GopHu}, denoted by $G_{\text{col}}$ and $G_{\text{row}}$ respectively.

For correcting any row-wise irreducible, regular erasure pattern $E$, the column code $\mathcal{C}_{\text{col}}$  is a simple parity check code, the generator matrix of which is given by
\begin{equation}
G_{\text{col}} = \left [   \underline{1}   \ \ \   I_{m-1}   \right ],
\end{equation}
where $G_{\text{col}}$ is a $(m-1) \times m$ matrix.

The row code $\mathcal{C}_{\text{row}}$ is constructed based on the erasure pattern $E$.  The generator matrix of the row code $G_{\text{row}}$ is of the size $(n-b) \times n$ and the entries of the generator matrix are either variables(indeterminates) or zeros. A variable present at position $(i,j)$ is denoted by $x_{i,j}$.
\begin{itemize}
\item For  $j \in [n] \setminus V$, which has no erasures, a row is added in the generator matrix $G_{\text{row}}$ which has a variable in the $j^{\text{th}}$ position and zeros in all the other positions. 
\item Consider a row of the erasure pattern $E$ which has $b+r_i, i \in U$ erasures and let $i \times V_i$ denote the enclosing grid of the row of erasures. Let $V_T$ denote a $b$ element subset of $V_i$. $r_i$ rows are added in the generator matrix corresponding to this row of the erasure pattern.  Each of the $r_i$ rows of the generator matrix is formed by placing variables in columns $V_T$ and at one additional column in $V_i \setminus V_T$. All the rest of the entries are zeros.
\item Until now, the number of rows of generator matrix which have already been filled are $n-v + \sum_{i \in U} r_i$. Since the erasure pattern in regular, we have that
\begin{equation*}
ub+ \sum_{i \in U} r_i \leq v + ub - b.
\end{equation*}
Hence, to complete the $n-b$ rows of the generator matrix, we have to add $n-b - (n-v + \sum_{i \in U} r_i) = v- b - \sum_{i \in U} r_i = t$ rows. Each of these rows is formed by placing variables in the $V$ columns and zeros in the other $[n] \setminus V$ columns.
\end{itemize}

Combining all the above, $G_{\text{row}}$ (upto permutation of columns) can be written as

\bean
G_{\text{row}} & = & \left[ \begin{array}{cc} \underbrace{G_{I}}_{(n-v) \times (n-v)}  & 0 \\
												0 & \underbrace{G_{S}}_{(\sum_{i \in U} r_i) \times v} \\ 0 & \underbrace{G_{T}}_{t \times v} \end{array} \right].
\eean

\begin{figure}[h]
\scriptsize
\bean
&& \hspace{-0.35in} \left [ \begin{array}{cccccccccc}
x_{1,1} & 0 & 0 & 0 & 0 & 0 & 0 & 0 & 0 & 0 \\
0 & x_{2,2} & 0 & 0 & 0 & 0 & 0 & 0 & 0 & 0 \\
0 & 0 & 0 & 0 & 0 & 0 & x_{3,7} & x_{3,8} & x_{3,9} & 0 \\
0 & 0 & 0 & 0 & 0 & 0 & x_{4,7} & x_{4,8} & 0 &  x_{4,10} \\
0 & 0 & 0 & 0 & 0 & x_{5,6} & x_{5,7} & x_{5,8} & 0 &  0 \\
0 & 0 & x_{6,3} & 0 & 0 & 0 &0 & 0 & x_{6,9} &  x_{6,10} \\
0 & 0 & 0 & x_{7,4} & x_{7,5} & x_{7,6} &0 & 0 & 0 &  0 \\
0 & 0 & x_{8,3} & x_{8,4} & x_{8,5} & 0 &0 & 0 & 0 &  0
\end{array} \right ]
\eean
\caption{$G_{\text{row}}$ for the erasure pattern in the earlier example. Rows $1$ and $2$ in the above matrix correspond to the first two non-erasure columns. Rows $3$ and $4$ correspond to the first row of the erasure pattern. Note that $V_1 = \{7,8,9,10\}$ and $V_T = \{7,8\}$. Rows $5,6,7$ and $8$ correspond to the next four rows of the erasure pattern. In this matrix, there is no $G_T$ component.}
\end{figure}

The generator matrix $G$ of the product code \cite{MacSlo} in terms of the generator matrices of the row and column codes is given by
\begin{eqnarray} \label{eq:product_code}
G & = & G_{\text{col}} \otimes G_{\text{row}} \nonumber \\
& = & \left [   \begin{array}{cccc}   G_{\text{row}} & G_{\text{row}} & &   \\  G_{\text{row}} &  & \ddots &   \\ G_{\text{row}} & &  &  G_{\text{row}}  \end{array} \right ].
\end{eqnarray}

Now, we have to prove that the erasure pattern $E$ is recoverable by the code $\mathcal{C}$. It is enough to show that there exists an assignment of the variables in $G_{\text{row}}$ such that $\text{rank}(G|_{D \setminus E}) = (n-b)(m-1)$. Without loss of generality, we assume that the parity block column (the one which has $m$ copies of $G_{\text{row}}$) is always included in $E$. Otherwise, the columns of $G_{\text{col}}$ can be permuted so that it is included. 

To examine the structure of $G|_{D \setminus E}$, we will first consider the systematic part (last $m-1$ block columns in \eqref{eq:product_code}). $G_{\text{row}}$ corresponding to $i \in U$ has erasures and the submatrix which remains after deleting the columns corresponding to the erasures has the structure\footnote{The matrices $G_{S_i}$, $G_{T_i}$, $G_{Z_i}$ and $G_{Y_i}$ are used to denote particular sub matrices of $G_{\text{row}}$. Note that $S_i, T_i, Z_i, Y_i$ by themselves do not refer to anything.}

\begin{equation*}
G_{\text{row}}|_{[n] \setminus V_i}  = \left[ \begin{array}{cc} \underbrace{G_{I}}_{(n-v) \times (n-v)} & 0 \\
												0 &   \underbrace{G_{S_i}}_{(\sum_{i \in U} r_i \times v-b-r_i)} \\ 0 & \underbrace{G_{T_i}}_{(t \times v-b-r_i)} \end{array} \right].
\end{equation*}

It can be observed based on the construction of $G_{\text{row}}$ that $G_{S_i}$ has $r_i$ zero rows. Let $G_{Z_i}$ denote the matrix which remains after removing the  $r_i$ zero rows from $G_{S_i}$.
$G_{\text{row}}$ corresponding to $i \in [m] \setminus U$ remains unchanged, since there are no erasures in these rows. For consistency of notation, we have $V_i = \phi$, $G_{S_i} = G_{Z_i} = G_S$, $G_{T_i} = G_T$ for $i \in [m] \setminus U$. For ease of notation, we denote $\left [ \begin{array}{c} G_{Z_i} \\  G_{T_i} \end{array} \right ],  i \in [m]$ by $G_{Y_i}$.

By rearranging the rows of $G|_{D \setminus E}$ so that all the zero rows in $G_{S_i}, \forall i \in U$ are shuffled to the top, the resulting matrix $G_{\pi}$ has the following structure:

\begin{equation*}
G_{\pi}  = \left[ \begin{array}{c|ccccc} G_P  & & & & & \\ \hline  & G_I  & & & & \\  & & G_{Y_1} & & & \\   G_L & &  & \ddots & & \\   & & & & G_I  & \\  & & & & & G_{Y_{m-1}}  \end{array} \right ],
\end{equation*}
where $G_P$ is of size $(\sum_{j \in U} r_j) \times (n-b-r_1)$.

\begin{claim} \label{claim:match}
Consider the matrix $G_{Y_i},  i \in [m]$. There exists a complete matching in the bipartite graph constructed based on this matrix as in Lemma \ref{lem:mat_match}.
\end{claim}
\begin{proof}
First, we will consider the case when $i \in U$. We will show that there is a matching in $G_{Z_i}$ and since $G_{T_i}$ contains rows completely filled with variables, the matching in $G_{Z_i}$ can be easily extended to a matching in $G_{Y_i}$. In order to show that there is a matching in $G_{Z_i}$, we will verify the Hall's condition. Consider a subset $A$ formed by including all the $\sum_{j \in U_S} r_j$ vertices associated with rows $U_S \subseteq U$. The mapping between rows $U_S$ and left vertices of the bipartite graph can be done since the rows of $G_{\text{row}}$ (and hence $G_{Z_i}$) are constructed based on the rows $U$. Applying Lemma \ref{lem:aeq1} (since $G_{Z_i}$ is obtained by removing columns $V_i$ from $G_{\text{row}}$), we have that $|N(A)| \geq \sum_{j \in U_S} r_j$. Now, we consider the case when the subset $A$ is formed by $t_j$ of $r_j$ vertices corresponding to rows $U_S$ in $U$, where $t_j < r_j, j \in U_S$. Note that $|A| =  \sum_{j \in U_S} t_j$. Based on the construction of matrix $G_S$, we have that by removing $r_j -t_j$ vertices corresponding to  $j^{\text{th}}$ row, the neighbourhood can reduce almost by $r_j - t_j$. Hence, it follows that $|N(A)| \geq \sum_{j \in U_S} r_j - \sum_{j \in U_S} (r_j - t_j) =  \sum_{j \in U_S} t_j$.

Now, consider the case when $i \in [m] \setminus U$. Since $|U| \geq 2$, there is at least some $i$ such that bipartite graph of $G_{Y_i}$  has a matching (say $M_1$). 
The $r_i$ rows and the $b+r_i$ columns indexed by $V_i$, which have been erased to obtain $G_{Y_i}$, have a matching within themselves (say $M_2$), since the neighbourhood of any one of the $r_i$ rows has exactly one column unique to itself. Then, $M_1 \cup M_2$ is a matching in $\begin{bmatrix} G_{S} \\ G_{T} \end{bmatrix}$.
\end{proof}

Let $G_{Y'_i}, i \in [m]$ denote the square submatrix of $G_{Y_i}$ which is associated with the matching in Claim \ref{claim:match}. Applying Lemma \ref{lem:mat_match}, we have that $\text{det}(G_{Y'_i})$ is a non-zero polynomial.

Now consider the matching which results by applying Lemma \ref{lem:general_a} to the erasure pattern $E$ with $U_R = \{1\}$. Let $V_M \subseteq [n]$ denote the columns (right vertices) in the matching. In the example in Fig. \ref{fig:bipartite_match}, $V_M = \{6,7,8,9,10\}$. Let $G_{P'}$ be square submatrix of $G_P$ by restricting to $V_M$ columns. It can seen that the all the variables in $G_{P^{'}}$ are all distinct, and by Lemma \ref{lem:general_a}, there exists a matching between the $\sum_{j \in U} r_j$ rows and the columns that are retained in $G_{P^{'}}$. Hence, applying Lemma \ref{lem:mat_match}, we have that $\text{det}(G_{P'})$ is also a non-zero polynomial.

Consider the following square submatrix of $G_{\pi}$:


\begin{equation*}
G_{\pi '}  = \left[ \begin{array}{c|ccccc} G_{P'}  & & & & & \\ \hline  & G_I  & & & & \\  & & G_{Y'_1} & & & \\   G_{L'} & &  & \ddots & & \\   & & & & G_I  & \\  & & & & & G_{Y'_{m-1}}  \end{array} \right ],
\end{equation*}

%
%

\begin{equation*}
\text{det}(G_{\pi '}) = \text{det}(G_{P'})  \text{det}(G_{I})^{m-1} \prod_{i=1}^{m-1} \text{det}(G_{Y'_i}).
\end{equation*}

It follows that $\text{det}(G_{\pi '})$ is a non-zero multivariate polynomial, since each of the factors in the product are non-zero. Hence, the variables can be assigned values from a sufficiently large finite field $\mathbb{F}_q$ such that  $G_{\pi '}$ is a full rank matrix. Hence, $\text{rank}(G|_{D \setminus E}) = \text{rank}(G_{\pi '}) = (n-b)(m-1)$. Thus, we have proved that the erasure pattern $E$ is recoverable.

\end{proof}

\section{Partial Characterization of Recoverable Erasure Patterns for $a=2$} \label{sec:aeq2}

In this section, We define an extended erasure pattern $E^{'}$ of $E$ where $E$ is an erasure pattern for topology $T_{m,n}(a=1, b, 0)$, $E'$ is for $T_{m+m',n}(a=2, b, 0)$ and $E'$ is obtained from $E$ by replicating some rows of erasures in $E$. If $E$ is row-wise irreducible and regular, we prove that $E^{'}$ is also regular and recoverable.

\begin{defn}[Extended Erasure Pattern] \label{defn:extended}
Consider an erasure pattern $E \subseteq [m] \times [n]$ which is row-wise irreducible and regular for the topology $T_{m,n}(a=1, b, 0)$. Let $U \times V$ denote the enclosing grid of $E$ in $[m] \times [n]$. Let $i \times V_i$ denote the enclosing grid for the erasures in $i^{\text{th}}$ row $i \in U$. Consider an erasure pattern $E'$ for the topology $T_{m+m',n}(a=2, b, 0)$, $m' \leq m$ formed by extending $E$ as follows:
\begin{itemize}
\item Rows of the erasure pattern are replicated i.e., $V_{m+\ell} = V_j$, $1 \leq \ell \leq m', 1 \leq j \leq m$. 
\item The replication factor of any row of the erasure pattern is atmost two, i.e., $V_{m+\ell} \neq V_{m+\ell '}$ when $\ell \neq \ell ' $. 
\end{itemize}
The erasure pattern $E'$ will be referred to as extended erasure pattern.
\end{defn}

\begin{lem}
Any extended erasure pattern resulting from Definition \ref{defn:extended} is row-wise irreducible and regular for the topology $T_{m+m',n}(a=2, b, 0)$.
\end{lem}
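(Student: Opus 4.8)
The plan is to treat the two assertions separately, with row-wise irreducibility being essentially immediate and regularity carrying all the content. For row-wise irreducibility: the parameter $b$ is unchanged between the two topologies, and every row of $E'$ is either an original row of $E$ (which, being row-wise irreducible for $T_{m,n}(a=1,b,0)$, carries $b+r_i\geq b+1$ erasures) or a verbatim copy $V_{m+\ell}=V_j$ of such a row, hence also carries $b+r_j\geq b+1$ erasures. So no row of $E'$ has between $1$ and $b$ erasures, which is exactly row-wise irreducibility for $T_{m+m',n}(a=2,b,0)$.

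For regularity I would verify \eqref{eq:regular} (with $a=2$) on an arbitrary test grid $U'\times V'\subseteq[m+m']\times[n]$, writing $|U'|=u'$ and $|V'|=v'$. If $v'\leq b$ or $u'\leq 2$, then $\max(u'-2,0)\max(v'-b,0)=0$ and the bound reads $|E'\cap(U'\times V')|\leq u'v'$, which is trivially true. So the only nontrivial case is $v'>b$ and $u'\geq 3$, where the right-hand side of \eqref{eq:regular} simplifies to $u'v'-(u'-2)(v'-b)=2v'+(u'-2)b$.

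The key step is to pull the count on $E'$ back to a count on $E$. Each row of $U'$ --- whether an original row or a copy --- has erasure support equal to some $V_j$ with $j\in[m]$; let $\phi:U'\to[m]$ send a row to this source index, let $J=\phi(U')$ be the set of distinct sources, and let $J_2\subseteq J$ be the sources attained twice, i.e.\ with $|\phi^{-1}(j)|=2$. Because every support $V_j$ occurs at most twice in $E'$ (the original together with at most one copy, the copies being pairwise distinct by Definition \ref{defn:extended}), we have $|\phi^{-1}(j)|\in\{1,2\}$, so $u'=|J|+|J_2|$. Since the erasures of a row sit exactly on its support, this gives the exact identity
\begin{equation*}
|E'\cap(U'\times V')| \;=\; \sum_{j\in J}|V_j\cap V'| \;+\; \sum_{j\in J_2}|V_j\cap V'|.
\end{equation*}

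Now I would apply the $a=1$ regularity of $E$ twice --- once to the grid $J\times V'$ and once to $J_2\times V'$ --- using $|E\cap(K\times V')|=\sum_{j\in K}|V_j\cap V'|$ for any $K\subseteq[m]$. For $v'>b$ and $|K|\geq 1$ this yields $\sum_{j\in K}|V_j\cap V'|\leq |K|v'-(|K|-1)(v'-b)=v'+(|K|-1)b$. Summing the two instances (with $|K|=|J|$ and $|K|=|J_2|$) and substituting $|J|+|J_2|=u'$ produces exactly $2v'+(u'-2)b$, matching the simplified right-hand side; the degenerate case $J_2=\varnothing$ is absorbed since then $\sum_{j\in J}|V_j\cap V'|\leq v'+(u'-1)b\leq 2v'+(u'-2)b$ by $v'>b$. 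I expect the main obstacle to be purely organizational: accounting for the doubled rows so that the extra copy of each doubled source contributes precisely one additional application of the single-row ($a=1$) bound, and checking that the arithmetic of the two bounds collapses to the $a=2$ threshold --- the replication factor $2$ being exactly what matches the jump from $a=1$ to $a=2$.
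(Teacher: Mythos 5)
Your proposal is correct and follows essentially the same route as the paper: decompose the rows of the test grid into two subsets of source rows in $[m]$ whose sizes sum to $u'$, apply the $a=1$ regularity of $E$ to each, and add the two bounds $v'+(|K|-1)b$ to get $2v'+(u'-2)b$. Your partition (all distinct sources $J$ versus doubly-attained sources $J_2$) differs only cosmetically from the paper's (original rows $U_1$ versus sources $U'_2$ of the copied rows), and you are somewhat more careful about the degenerate cases ($u'\leq 2$, $v'\leq b$, $J_2=\varnothing$), which the paper dispatches with a one-line remark.
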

\begin{proof}
Let $E'$ be an extended erasure pattern of $E$. It is clear that $E'$ is row-wise irreducible.
Consider a sub grid $U \times V \subseteq [m+m'] \times [n]$. It is enough to consider $|U| \geq a+1=3$ and $|V| \geq b+1$ to verify the regularity condition.

Let $U_1 = U \cap [m] $ and $U_2 = U \cap \{m+1, \ldots, m+m'\}$. By the definition of extended erasure pattern, corresponding to $U_2$, there is a set $U'_2 \in [m]$ such that the structure of erasures in $U_2 \times V$ is the same as that in $U'_2 \times V$.
\begin{eqnarray*}
| E' \cap (U \times V) | & = & | E' \cap ((U_1 \cup U_2) \times V) | \\
& = & | E' \cap (U_1 \times V) | + | E' \cap (U_2 \times V) | \\
& = & | E \cap (U_1 \times V) | + | E \cap (U'_2 \times V) | \\
& \stackrel{(a)}{\leq} &  (v+u_1 b-b) + (v+u_2 b-b)  \\
& = &  2v + ub - 2b,
\end{eqnarray*}
where $(a)$ follows since $E$ is regular for topology $T_{m,n}(a=1, b, 0)$.

\end{proof}

\begin{thm}
Any extended erasure pattern resulting from Definition \ref{defn:extended} is recoverable for the topology $T_{m+m',n}(a=2, b, 0)$.
\end{thm}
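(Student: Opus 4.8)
The plan is to mirror the tensor-product argument proving Theorem~\ref{thm:regular_suf}, upgrading the single column parity to two. By Lemma~\ref{lem:irred} I may assume $E'$ is row-wise irreducible (it already is), so it suffices to build a code $\mathcal{C}$ instantiating $T_{m+m',n}(2,b,0)$ with $\dim(\mathcal{C}|_{D\setminus E'})=\dim(\mathcal{C})=(m+m'-2)(n-b)$. I take $\mathcal{C}=\mathcal{C}_{\text{col}}\otimes\mathcal{C}_{\text{row}}$, where $\mathcal{C}_{\text{row}}$ is exactly the row code built for the underlying $a=1$ pattern $E$ in the proof of Theorem~\ref{thm:regular_suf}; this is legitimate because $E'$ has the same enclosing columns $V$ as $E$, and the $a=1$-regularity of $E$ makes that $G_{\text{row}}$ well defined (note that basing $G_{\text{row}}$ on $E'$ would fail, since $\sum_{i\in U'}r_i$ may reach $2(v-b)>v-b$). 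The column code is promoted to an $[m+m',\,m+m'-2]$ MDS (superregular) code with generator $G_{\text{col}}=[\,P\mid I_{m+m'-2}\,]$ carrying two parity columns; by permuting codeword rows I arrange that the two parity block columns correspond to a pair of rows taken as $U_R$ in Construction~\ref{constr:general_a} with $a=2$, and where a replicated pair exists I choose $U_R$ to be one, so that the two parity blocks share the same erased column set.

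Next I form $G=G_{\text{col}}\otimes G_{\text{row}}$ as in \eqref{eq:product_code}, delete the columns indexed by $E'$, and permute rows so that every row that has become zero in all systematic block columns is shuffled to the top, exactly as in the $a=1$ proof. This yields a block-triangular matrix $G_\pi$ whose top rows form a parity part $G_P$ supported on the two parity block columns, whose diagonal carries the blocks $G_I$ and $G_{Y_i}$ for each systematic codeword row (a replica block column reusing the block of its original, since $V_{m+\ell}=V_j$), and whose coupling block $G_L$ sits below $G_P$ and, by triangularity, does not affect the determinant. Selecting a square submatrix $G_{\pi'}$, its determinant factors as $\det(G_{P'})\cdot\det(G_I)^{m+m'-2}\cdot\prod_i\det(G_{Y_i'})$, and it remains to certify that each factor is a nonzero polynomial, after which Lemma~\ref{lem:mat_match} and Schwartz--Zippel finish the argument.

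The systematic factors go through unchanged: each $G_{Y_i}$ is the nonzero-row part of $G_{\text{row}}|_{[n]\setminus V_i}$, so Lemma~\ref{lem:aeq1} applied to the $a=1$ pattern $E$ gives a complete matching in its bipartite graph and hence $\det(G_{Y_i'})\neq0$ as a polynomial, with replica columns inheriting the matching of their originals (this is Claim~\ref{claim:match} verbatim). For the parity factor I apply Lemma~\ref{lem:general_a} to $E'$ with $a=2$ and the chosen $U_R$, which produces a complete matching of the excess erasures of $U'\setminus U_R$ into the non-erasures lying in the two $U_R$ rows, i.e.\ into the columns of the two parity block columns (the excess erasures of the two $U_R$ rows themselves being matched within their own columns as in the $M_2$ step of Claim~\ref{claim:match}).

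The hard part is the last factor, $\det(G_{P'})\neq0$. In contrast to the $a=1$ case, the two parity block columns are scaled copies $P_{\cdot,1}G_{\text{row}}$ and $P_{\cdot,2}G_{\text{row}}$ of a single row-code generator, so their entries are not distinct indeterminates and Lemma~\ref{lem:mat_match} does not apply directly---two matched rows can land on the same variable $x_{\beta,j}$ in the two different parity blocks. The resolution is to expand $\det(G_{P'})$ around the matching from Lemma~\ref{lem:general_a}: collecting the permutations that yield a fixed monomial in the $x_{i,j}$, their coefficients assemble into a product of $1\times1$ and $2\times2$ minors of $P$, the $2\times2$ minors appearing precisely where a repeated column is used in both parity blocks (the swap $P_{\alpha_1,1}P_{\alpha_2,2}-P_{\alpha_1,2}P_{\alpha_2,1}$). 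Superregularity of $G_{\text{col}}$ makes every such minor nonzero, so the matching monomial survives with nonzero coefficient and $\det(G_{P'})$ is a nonzero polynomial. Hence some assignment over a large $\mathbb{F}_q$ makes $G_{\pi'}$ full rank, proving $E'$ recoverable. The genuinely new ingredient---and the only place the argument departs from Theorem~\ref{thm:regular_suf}---is this coupling between the $a=2$ matching of Lemma~\ref{lem:general_a} and the superregularity of the column code, which is what prevents the repeated-variable determinant from collapsing.
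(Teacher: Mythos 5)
Your proposal follows essentially the same route as the paper's proof: reuse the row code built for the underlying $a=1$ pattern $E$, promote the column code to one with two parity columns, reduce to the block-triangular determinant $\det(G_{P'})\det(G_I)^{m+m'-2}\prod_i\det(G_{Y_i'})$, and handle the parity block via the matching from Lemma \ref{lem:general_a} with $|U_R|=2$. The only difference is in how the repeated-variable collision in $G_{P'}$ is dispatched --- the paper keeps the entries of $\Sigma$ as indeterminates and argues that the matching monomial in the $\sigma$'s and $x$'s jointly arises from a unique permutation (since $x_{j,k}$ can recur only paired with the other $\sigma_{\cdot,\beta}$), whereas you fix a superregular $P$ and collect coefficients into nonzero $1\times1$ and $2\times2$ minors --- two interchangeable formulations of the same step.
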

\begin{proof}
Let $E'$ be the extended erasure pattern of $E$, where $E$ is row-wise irreducible and regular for the topology  $T_{m,n}(a=1, b, 0)$. Let $U \times V$ denote the enclosing grid of $E'$ in $[m+m'] \times [n]$. To recover $E'$, we employ the same row code as the one used for recovering $E$ in $T_{m,n}(a=1, b, 0)$, the construction of which is described in the Proof of Theorem \ref{thm:regular_suf}.
The generator matrix of the column code $G_{\text{col}}$ is given by
\begin{equation}
G_{\text{col}} = \left [ \Sigma_{(m+m^{'}-2) \times 2}   \ \ \   \Lambda_{(m+m^{'}-2) \times (m+m^{'}-2)}  \right ],
\end{equation}
where $ \Sigma = [\sigma_{i,j}]$, $ 1 \leq i \leq m+m'-2 $, $ 1 \leq j \leq 2 $ and all the entries in $ \Sigma$ are indeterminates, $ \Lambda $ is a diagonal matrix with entries $\lambda_{i,i}$ as indeterminates.
The product code has the following generator matrix
\begin{eqnarray} \label{eq:product_code_a2}
G & = & G_{\text{col}} \otimes G_{\text{row}} = \left [\Sigma \otimes G_{\text{row}}   \ \ \   \Lambda \otimes G_{\text{row}} \right ] \nonumber \\
& = & \left [   \begin{array}{ccccc}   \sigma_{1,1}G_{\text{row}} & \sigma_{1,2}G_{\text{row}} & \lambda_{1,1}G_{\text{row}} &    \\ \vdots & \vdots  & & \ddots \\
\sigma_{\ell,1}G_{\text{row}} & \sigma_{\ell,2}G_{\text{row}} & & & \lambda_{\ell,\ell}G_{\text{row}} \end{array} \right ], \nonumber
\end{eqnarray}
where $\ell=m+m'-2$.

%
Similar to the $a=1$ case, after rearranging the zero rows of $G|_{D \setminus E^{'}}$, the resulting matrix $G_{\pi}$ has the following structure.\\
{ \small
\begin{equation*}
G_{\pi}  = \left[ \begin{array}{c|ccccc} G_P  & & & & & \\ \hline  & \lambda_{1,1}G_I  & & & & \\  & & \lambda_{11}G_{Y_1} & & &  \\ G_L & &  & \ddots & & \\   & & & & \lambda_{\ell,1}G_I  & \\  & & & & & \lambda_{\ell,1}G_{Y_{\ell}}  \end{array} \right ],
\end{equation*}
}
where $G_P$ is of size $(\sum_{j \in U} r_j) \times (2n-2b-r_1-r_2)$. Note that $G_P$ and $G_L$ are obtained by combining the first two block columns in $G|_{D \setminus E^{'}}$.
The matching in $G_{Y_i}, i \in \ell$ follows from the $a=1$ case since the row code is the same.

Now consider the matching which results by applying Lemma \ref{lem:general_a} to the erasure pattern $E'$ with $U_R = \{1,2\}$. Let $V_M$ denote the right vertices in the matching. Let $G_{P'}$ be square submatrix of $G_P$ by restricting to $V_M$ columns. By Lemma \ref{lem:general_a}, there exists a matching between the $\sum_{j \in U} r_j$ rows and the columns that are retained in $G_{P^{'}}$. However, note that unlike the $a=1$ case, each non-zero entry in this case is a product of variables $\sigma_{\alpha,\beta}$ and $x_{j,k}$. Also, note that the product of variables given by the matching is a monomial which cannot be cancelled by any other term in $\text{det}(G_{P'})$. To show this, assume that one of the entry in the matching is $\sigma_{1,\beta} x_{j,k}$. We would like to note that there can be atmost one more variable in $G_{P^{'}}$ containing $x_{j,k}$ and if it is present, then necessarily it must be multiplied by $\sigma_{2,\beta}$. Hence, the monomial formed by the matching is unique, following which $\text{det}(G_{P'})$ is a non-zero polynomial. Rest of the proof is exactly same as the $a=1$ case.

%
%
%
%
\end{proof}


\bibliographystyle{IEEEtran}
\bibliography{mrc}

\end{document}